\pdfoutput=1
\documentclass[a4paper,onecolumn]{quantumarticle}
\usepackage[utf8]{inputenc}
\usepackage[T1]{fontenc}
\usepackage{lmodern}
\usepackage[UKenglish]{babel}
\usepackage[unicode,colorlinks]{hyperref}
\usepackage{graphicx}
\usepackage{amsmath,amssymb,amsthm}
\usepackage{xspace}
\usepackage{mathtools}
\usepackage{dsfont}
\usepackage{xcolor}
\usepackage{tikz}

\usepackage{csquotes}
\usepackage[backend=bibtex,style=links,maxnames=10,defernumbers=true,giveninits=true]{biblatex} 
\bibliography{biblio}


\DeclareMathOperator{\tr}{tr}

\DeclareMathOperator{\rank}{rank}

\let\originalleft\left
\let\originalright\right
\renewcommand{\left}{\mathopen{}\mathclose\bgroup\originalleft}
\renewcommand{\right}{\aftergroup\egroup\originalright}


\newcommand{\KetBra}[1]{{\Ket{#1}\!\Bra{#1} }}

\newcommand{\Bra}[1]{{ \langle \! \langle{#1}\vert }}
\newcommand{\Ket}[1]{{ \vert {#1}  \rangle \!  \rangle}}
\newcommand{\bigKet}[1]{{ \Big\vert {#1}  \Big\rangle \!  \Big\rangle}}

\newcommand{\bra}[1]{\langle #1|}

\newcommand{\ket}[1]{| #1 \rangle}

\newcommand{\braket}[2]{\left\langle #1 \middle| #2 \right\rangle}
\newcommand{\ketbra}[1]{\left|#1\middle\rangle\middle\langle#1\right|}

\newcommand{\de}[1]{\left(#1\right)}

\newcommand{\De}[1]{\left[#1\right]}

\newcommand{\mathand}{\quad\text{and}\quad}

\newcommand{\id}{\mathds{1}}

\newcommand{\be}{\begin{equation}}
\newcommand{\ee}{\end{equation}}

\newcommand{\choi}{\mathfrak{C}}

\newcommand{\eg}{e.g.\@\xspace}
\newcommand{\ie}{i.e.\@\xspace}
\newcommand{\etal}{\emph{et al.}\@\xspace}

\newtheorem{theorem}{Theorem}
\newtheorem*{theorem*}{Theorem}

\newtheorem{definition}[theorem]{Definition}

\mathtoolsset{centercolon}

\hypersetup{pdftitle={{A purification postulate for quantum mechanics with indefinite causal order}}}

\begin{document}
\title{A purification postulate for quantum mechanics with indefinite causal order}
\author{Mateus Araújo}
\affiliation{Faculty of Physics, University of Vienna, Boltzmanngasse 5 1090 Vienna, Austria}
\affiliation{Institute for Quantum Optics and Quantum Information (IQOQI), Boltzmanngasse 3 1090 Vienna, Austria}
\affiliation{Institute for Theoretical Physics, University of Cologne, Germany}
\author{Adrien Feix}
\affiliation{Faculty of Physics, University of Vienna, Boltzmanngasse 5 1090 Vienna, Austria}
\affiliation{Institute for Quantum Optics and Quantum Information (IQOQI), Boltzmanngasse 3 1090 Vienna, Austria}
\author{Miguel Navascués}
\affiliation{Institute for Quantum Optics and Quantum Information (IQOQI), Boltzmanngasse 3 1090 Vienna, Austria}
\author{Časlav Brukner}
\affiliation{Faculty of Physics, University of Vienna, Boltzmanngasse 5 1090 Vienna, Austria}
\affiliation{Institute for Quantum Optics and Quantum Information (IQOQI), Boltzmanngasse 3 1090 Vienna, Austria}
\date{\today}

\begin{abstract}
To study which are the most general causal structures which are compatible with local quantum mechanics, Oreshkov \etal \cite{oreshkov12} introduced the notion of a process: a resource shared between some parties that allows for quantum communication between them without a predetermined causal order. These processes can be used to perform several tasks that are impossible in standard quantum mechanics: they allow for the violation of causal inequalities, and provide an advantage for computational and communication complexity. Nonetheless, no process that can be used to violate a causal inequality is known to be physically implementable. There is therefore considerable interest in determining which processes are physical and which are just mathematical artefacts of the framework. Here we make key progress in this direction by proposing a purification postulate: processes are physical only if they are purifiable. We derive necessary conditions for a process to be purifiable, and show that several known processes do not satisfy them.
\end{abstract}
\maketitle

\section{Introduction}

It is widely believed that any future theory that unifies quantum mechanics and gravity will feature quantum uncertainty in the metric tensor \cite{ashtekar96}, which should produce indefinite causal structures. Our understanding of the notion of indefinite causal structures is, however, still lacking. To investigate that, one approach is to consider simple, concrete models that are compatible with the laws of quantum information processing. One such model -- the process matrix formalism -- was introduced by Oreshkov \etal as the most general causal structure that can connect local laboratories where standard quantum mechanics is valid without creating paradoxical causal loops~\cite{oreshkov12}.

These processes have been shown to enable the realization of tasks that are otherwise impossible: they allow for the violation of causal inequalities \cite{oreshkov12,baumeler13,baumeler14,branciard15simplest,baumeler16,oreshkov15,abbott16}, can be detected by causal witnesses \cite{araujo15witnessing}, provide an advantage in quantum computation \cite{chiribella09,chiribella12,araujo14}, and enable a reduction in communication complexity \cite{feix15,allardguerin16}. But even though one such process -- the quantum switch -- has been implemented experimentally \cite{procopio_experimental_2014,rubino16}, in general it is not known if all process are physical or some of them are just mathematical artefacts of the formalism. They were, after all, defined only from the requirement of not generating logical contradictions, and processes realisable in nature are likely to fulfil additional physical constraints. 

One can, therefore, look for requirements beyond mere logical consistency in order to shed light on the physicality of these processes. As seen from the search for physical principles to determine the set of quantum correlations, such meta-theoretical principles can provide nontrivial constraints on the possible theories \cite{brassard06,linden07,pawlowski09,navascues09,fritz13}. The principle we choose to investigate here is reversibility of the transformations between quantum states \cite{chiribella10}: it is, after all, valid in all fundamental physical theories, and has been a central ingredient in all of the reconstructions of quantum mechanics to date \cite{hardy01,dakic09,masanes11,chiribella11,barnum14,hoehn14b,hoehn15}.

To define what reversibility means for processes we first needed to generalise their definition: we extend them with a global past and a global future, so that they can be seen as inducing a transformation from quantum states in the past to quantum states in the future, after passing through the causally indefinite region of the local laboratories. We can then define pure processes as those that preserve the reversibility of the underlying operations, \ie, those that induce a unitary transformation from the past to the future whenever unitary transformations are also applied in the local laboratories.

With these definitions in hand, we can propose the purification postulate: \textit{processes are physical only if they are purifiable}, \ie, if they can be expressed as a part of a pure process in a larger space. It turns out to be rather difficult to determine if a given process is purifiable. We derive, then, a necessary but not sufficient condition for purifiability, and show that several known processes fail to satisfy this condition, among which is the process from Ref.~\cite{oreshkov12} that was the first one shown to violate a causal inequality. In fact, we haven't found a purification for any bipartite process that was able to violate a causal inequality. There exists, however, a tripartite process which is purifiable and able to violate causal inequalities \cite{araujo14private,baumeler16,baumeler15private}. If one takes the view that the violation of causal inequalities is impossible in Nature, as done in Ref.~\cite{feix16}, this implies that being purifiable is not a sufficient condition for a process to be physical.

The paper is organized as follows: in Section \ref{sec:pure} we generalise the definition of a process and introduce the notion of pure processes. In Section \ref{sec:postulate} we propose the purification postulate. In Section \ref{sec:ifandonlyif} we derive necessary and sufficient conditions for a process to be purifiable, and in Section \ref{sec:onlyif} we derive a necessary but not sufficient condition for purifiability that can be easily tested. In Section \ref{sec:examples} we apply our condition to several known processes, and in Section \ref{sec:counterexample} we present a tripartite pure process that violates causal inequalities.

\section{Pure processes}\label{sec:pure}

In Ref.~\cite{oreshkov12}, the process matrix was introduced as the most general way to allow two parties Alice and Bob to communicate -- not necessarily in a causally ordered way -- but without creating paradoxes. These parties were assumed to obey the laws of quantum mechanics locally, and the no-paradox condition means that whatever probabilities Alice and Bob may extract from their local experiments will be positive and normalised. In other words, the process matrix was defined as a (multi)linear function that takes completely positive (CP) maps to probabilities.

To be able to talk about purification, we need to extend this definition to take CP maps not to probabilities, but to other CP maps. This view of a process as a higher-order transformation is much in the spirit of quantum combs \cite{chiribella09b}, but with the crucial difference that combs are usually defined to be causally ordered. Note also that it can be recovered from the multipartite definition of process presented for example in Ref.~\cite{araujo15witnessing}. We shall, nevertheless, explicitly define bipartite processes as higher-order transformations for clarity and to make this paper self-contained. The multipartite case can be obtained by a straightforward generalisation of the arguments presented here or from Ref.~\cite{araujo15witnessing}.

To this end, let $\mathcal A$ be a completely positive trace preserving (CPTP) map  that takes the Hilbert spaces\footnote{$A_I$ (and its analogues) is defined as the set of $d_{A_I} \times d_{A_I}$ complex matrices.} $A_I,A_I'$ to $A_O,A_O'$, and $\mathcal B$ a CPTP map that takes $B_I,B_I'$ to $B_O,B_O'$. A process is then defined as the most general linear transformation that acts trivially on $A_I',B_I',A_O',B_O'$ and takes $\mathcal A$ and $\mathcal B$ to a CPTP map $\mathcal G_{\mathcal A,\mathcal B}$ from $A_I',B_I',P$ to $A_O',B_O',F$, as represented in Fig.~\ref{fig:w_higher-order}. Note that the Hilbert space $F$ cannot signal to any other Hilbert space, and can therefore be interpreted as a global future. Conversely, none of the Hilbert spaces can signal to $P$, which can then be interpreted as a global past.

\begin{figure}[htb]
 \centering
          \begin{tikzpicture}[scale=1.6,baseline=(W.center)]
		\node[ minimum width=0.7cm,minimum height=0.85cm] (W) at (0,0) {$\mathcal G_{\mathcal A,\mathcal B}$};
                \node[] (AI) at (-0.5,0.95) {};
                \node[] (BI) at (0.5,0.95) {};
                \node[] (F) at (0,0.95) {};
                \node[] (AO) at (-0.5,-0.95) {};
                \node[] (BO) at (0.5,-0.95) {};
                \node[] (P) at (0,-0.95) {};
                 \draw[ thick] (-0.8,-0.5) -- (0.8,-0.5) -- (0.8,0.5) -- (-0.8,0.5) -- cycle ;
                \draw[thick,  <-] (AI) -- (-0.5,0.5);
                \draw[thick,  <-] (BI) -- (0.5,0.5);
                \draw[thick,  <-] (F) -- (0,0.5);
                \draw[thick,  ->] (AO) -- (-0.5,-0.5);
                \draw[thick,  ->] (BO) -- (0.5,-0.5);
                \draw[thick,  ->] (P) -- (0,-0.5);
		\node[] (SO) at ([shift={(0.15cm,0.2cm)}]AO) {\footnotesize $A_I'$};
		\node[] (SI) at ([shift={(0.15cm,-0.27cm)}]AI) {\footnotesize $A_O'$};
		\node[] (bobO) at ([shift={(0.15cm,0.2cm)}]BO) {\footnotesize $B_I'$};
		\node[] (bobI) at ([shift={(0.15cm,-0.27cm)}]BI) {\footnotesize $B_O'$};
                \node[] (CI) at ([shift={(0.15cm,-0.27cm)}]F) {\footnotesize $F$};
                \node[] (CI) at ([shift={(0.15cm,0.2cm)}]P) {\footnotesize $P$};
  \end{tikzpicture}
  \begin{tikzpicture}[scale=1.6,baseline=(equal.center)]
  \node[] (equal) at (0,0) {$\quad = \quad$};
  \end{tikzpicture}
          \begin{tikzpicture}[scale=1.6,baseline=(W.center)]
		\node[] (alice) at (-1.2,0)  {$\mathcal A$};
		\node[red] (W) at (0,0) {$W$};
		\node[] (bob) at (1.2,0) {$\mathcal B$};
                \node[] (F) at (0,1.275) {};
                \node[] (P) at (0,-1.275) {};
                 \draw[red, thick] (-1.2,-1) -- (1.2,-1) -- (1.2,-0.65) -- (0.35,-0.65) -- (0.35,0.65) -- (1.2,0.65) -- (1.2,1) -- (-1.2,1) -- (-1.2, 0.65) -- (-0.35,0.65) -- (-0.35,-0.65) -- (-1.2,-0.65) -- cycle;
                \draw[thick, red, ->] (-0.9,0.375) -- (-0.9,0.65);
                \draw[thick,  ->] (-1.5,0.375) -- (-1.5,0.65);
                \draw[thick,  <-] (-1.5,-0.375) -- (-1.5,-0.65);
		\draw[thick] (-1.8,0.375) -- (-0.6,0.375) -- (-0.6,-0.375) -- (-1.8,-0.375) -- cycle;
                \draw[thick, red, <-] (-0.9,-0.375) -- (-0.9,-0.65);
                \draw[thick, red, ->] (0.9,0.375) -- (0.9,0.65);
                \draw[thick,  ->] (1.5,0.375) -- (1.5,0.65);
                \draw[thick,  <-] (1.5,-0.375) -- (1.5,-0.65);
                \draw[thick] (1.8,0.375) -- (0.6,0.375) -- (0.6,-0.375) -- (1.8,-0.375) -- cycle;               
                \draw[thick, red, <-] (0.9,-0.375) -- (0.9,-0.65);
                \draw[thick, red, <-] (0,1.275) -- (0,1);
                \draw[thick, red, ->] (0,-1.275) -- (0,-1);
		\node[red] (aliceO) at ([shift={(0.5cm,0.5cm)}]alice) {\footnotesize $A_O$};
		\node[] (aliceO) at ([shift={(-0.5cm,0.5cm)}]alice) {\footnotesize $A_O'$};
		\node[red] (aliceI) at ([shift={(0.5cm,-0.5cm)}]alice) {\footnotesize $A_I$};
		\node[] (aliceI) at ([shift={(-0.5cm,-0.5cm)}]alice) {\footnotesize $A_I'$};
		\node[red] (bobO) at ([shift={(-0.5cm,0.5cm)}]bob) {\footnotesize $B_O$};
		\node[] (bobO) at ([shift={(0.5cm,0.5cm)}]bob) {\footnotesize $B_O'$};
		\node[red] (bobI) at ([shift={(-0.5cm,-0.5cm)}]bob) {\footnotesize $B_I$};
		\node[] (bobI) at ([shift={(0.5cm,-0.5cm)}]bob) {\footnotesize $B_I'$};
                \node[red] (CI) at ([shift={(0.15cm,-0.15cm)}]F) {\footnotesize $F$};
                \node[red] (CI) at ([shift={(0.15cm,0.15cm)}]P) {\footnotesize $P$};
  \end{tikzpicture}
  \caption{A process $W$ is a bilinear function from the CPTP maps $\mathcal A,\mathcal B$ to a CPTP map $\mathcal G_{\mathcal A,\mathcal B}$. This map takes states from the global past $P$ and auxiliary spaces $A_I',B_I'$ to the global future $F$ and auxiliary spaces $A_O',B_O'$.}\label{fig:w_higher-order}
\end{figure}
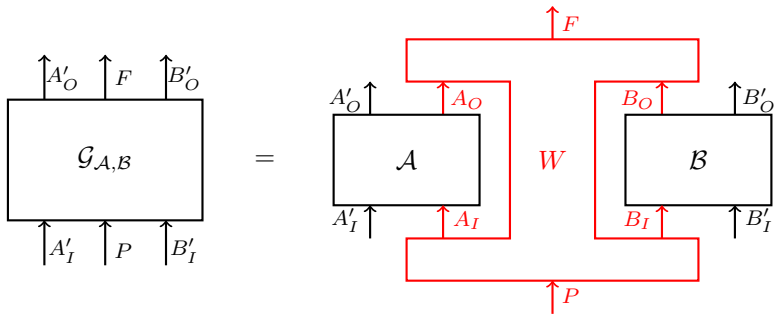

To find a matrix representation for a process, we make use of the Choi-Jamiołkowski (CJ) isomorphism, which we recap in Appendix \ref{sec:choi}. Let then  $A = \choi(\mathcal A)$, $B = \choi(\mathcal B)$, and $G_{A,B} = \choi\de{\mathcal G_{\mathcal A,\mathcal B}}$ be the CJ representations of the CPTP maps $\mathcal A$, $\mathcal B$, and $\mathcal G_{\mathcal A,\mathcal B}$. By linearity we can represent the mapping from $A,B$ to $G_{A,B}$ as 
\be\label{eq:w_higher-order}
G_{A,B} = \tr_{A_IA_OB_IB_O}[ W^{T_{A_IA_OB_IB_O}}(A\otimes B)],
\ee
where $W \in P \otimes A_I \otimes A_O \otimes B_I \otimes B_O \otimes F$ is the process matrix we are defining, $\cdot^{T_{A_IA_OB_IB_O}}$ denotes partial transposition on the subsystems $A_I,A_O,B_I,$ and $B_O$, and identity matrices on the subsystems $P$, $F$, $A_I'$, $A_O'$, $B_I'$, and $B_O'$ are left implicit. This formula can be rewritten in a more convenient way using the link product, which was designed to conveniently express the CJ representation of a composition of quantum operations \cite{chiribella09b}, and we recap in Appendix \ref{sec:link}:
\be
G_{A,B} = W*(A\otimes B).
\ee
We need $G_{A,B}$ to be a valid CPTP map for all valid CPTP maps $A$ and $B$. This imposes the following restrictions on $W$, which we derive in Appendix \ref{sec:valid_w_purification}:
\begin{gather}
 W \ge 0, \\
 \tr W = d_{A_O}d_{B_O}d_P, \\
 W = L_V(W),
\end{gather}
where $L_V$ is a projector on the linear subspace of valid process matrices defined in equation \eqref{eq:projector_lv}.

As an explicit example of a process defined as a higher-order transformation consider a situation where a state in $P = P_1 \otimes P_2 $ is sent to Alice and Bob, and the resulting state from Alice and Bob is sent to $F = F_1 \otimes F_2$. The process matrix is given by $\ketbra{w_\text{state}}$, where
\be
\ket{w_\text{state}} = \Ket{\id}^{P_1A_I}\Ket{\id}^{P_2B_I}\Ket{\id}^{A_OF_1}\Ket{\id}^{B_OF_2},
\ee
and $\Ket{\id}^{P_1A_I}$ is the ``pure'' CJ representation of the identity map between $P_1$ and $A_1$, as defined in Appendix \ref{sec:choi}. Here and throughout the paper we shall use the superscript of a vector to indicate the Hilbert space to where its projector belongs, \eg, $\Ket{\id}^{P_1A_I}$ means that $\KetBra{\id} \in P_1 \otimes A_I$.

A less trivial example is the process where a state in $P$ goes first to Alice, then to Bob, and finally to $F$. Using again the vector representation the process is
\be\label{eq:w_channel}
\ket{w_\text{channel}} = \Ket{\id}^{PA_I}\Ket{\id}^{A_OB_I}\Ket{\id}^{B_OF}.
\ee
Finally, an example of a process that encodes an indefinite causal order is the quantum switch \cite{chiribella09,araujo15witnessing}, which in this representation is:
\be\label{eq:wswitchket}
\ket{w_\text{switch}} = \ket{0}^{P_1}\Ket{\id}^{P_2A_I}\Ket{\id}^{A_OB_I}\Ket{\id}^{B_OF_2}\ket{0}^{F_1} + \ket{1}^{P_1}\Ket{\id}^{P_2B_I}\Ket{\id}^{B_OA_I}\Ket{\id}^{A_OF_2}\ket{1}^{F_1}.
\ee
To connect this version of the quantum switch with the one in Refs.~\cite{chiribella09,araujo15witnessing}, one should prepare the control qubit and send it to $P_1$, and prepare the target qubit and send it to $P_2$. After the process is done the control qubit will be found in $F_1$ (unchanged), and the target qubit in $F_2$.

This definition of processes allows a natural definition of what it means for a process to be ``pure''. Just as we can define unitaries as the most general linear transformations that map pure states to pure states of the same dimension\footnote{More precisely, a linear map $\mathcal E$ is a unitary iff for all pure states $\ket{\psi}$ the transformed state $\mathcal I \otimes \mathcal E (\ketbra{\psi})$ is a pure state of the same dimension.}, we can define pure processes to be the most general linear transformations that map unitaries to unitaries. More precisely:
\begin{definition}\label{def:pure}
 A process $W$ is pure if for all unitaries $\mathcal A$, $\mathcal B$ the resulting transformation  $\mathcal G_{\mathcal A,\mathcal B}$ is also a unitary.
\end{definition}
It turns out that pure processes are unitary transformations from $A_O,B_O,P$ to $A_I,B_I,F$ and can be conveniently represented as vectors, as shown in the following theorem:
\begin{theorem}\label{thm:pure_representation}
 A process $W$ is pure if and only if $W = \KetBra{U_w}$ for some unitary $U_w$.
\end{theorem}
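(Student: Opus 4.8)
The plan is to prove both implications directly from Definition~\ref{def:pure}, leaning on two elementary facts recapped in the appendices. First, the Choi matrix of a unitary channel $\rho\mapsto U\rho U^\dagger$ is the rank-one operator $\Ket{U}\Bra{U}$, and conversely a CPTP map whose Choi matrix is rank one has a single Kraus operator, which trace preservation forces to be an isometry --- hence a unitary as soon as its input and output dimensions coincide. Second, the link product of two rank-one positive operators is again rank one, $\KetBra{v}*\KetBra{w}=\Ket{v*w}\Bra{v*w}$ for the induced vector link product. With these in hand the theorem essentially writes itself, the only real work being the bookkeeping.

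For the ``if'' direction I would take $W=\KetBra{U_w}$ with $U_w$ a unitary from $A_O\otimes B_O\otimes P$ to $A_I\otimes B_I\otimes F$ (consistent with $\tr W=d_{A_O}d_{B_O}d_P$, and forcing $d_{A_I}d_{B_I}d_F=d_{A_O}d_{B_O}d_P$). Given unitary local maps $\mathcal A,\mathcal B$ their Choi matrices are rank one, $A=\KetBra{U_A}$ and $B=\KetBra{U_B}$, so by the link-product fact $G_{A,B}=W*(A\otimes B)$ is rank one. Since $W$ is a valid process, $G_{A,B}$ is the Choi matrix of a CPTP map $\mathcal G_{\mathcal A,\mathcal B}$ from $A_I'B_I'P$ to $A_O'B_O'F$; being rank one this map is an isometry, and a one-line dimension count --- combining $d_{A_I}d_{A_I'}=d_{A_O}d_{A_O'}$ and $d_{B_I}d_{B_I'}=d_{B_O}d_{B_O'}$ (unitarity of $\mathcal A,\mathcal B$) with $d_{A_I}d_{B_I}d_F=d_{A_O}d_{B_O}d_P$ --- shows its input and output dimensions agree, so it is unitary. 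Hence $W$ is pure.

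For the ``only if'' direction the idea is to feed $W$ probe unitaries tailored so that $\mathcal G_{\mathcal A,\mathcal B}$ is a faithful copy of $W$ itself. Since the process is defined for ancillary spaces of any dimension, I would set $A_I':=A_O$, $A_O':=A_I$, $B_I':=B_O$, $B_O':=B_I$ and let $\mathcal A,\mathcal B$ be the swaps exchanging $A_I\leftrightarrow A_O'$ and $A_I'\leftrightarrow A_O$ (and likewise for $B$), whose Choi matrices are $A=\KetBra{w_{\mathcal A}}$, $B=\KetBra{w_{\mathcal B}}$ with $\ket{w_{\mathcal A}}=\Ket{\id}^{A_IA_O'}\Ket{\id}^{A_I'A_O}$ and analogously for $B$. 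Computing the link product $W*(A\otimes B)$ --- which contracts exactly the legs $A_I,A_O,B_I,B_O$ against those identity wires --- one finds that $G_{A,B}$ is, after the relabeling $A_I\mapsto A_O'$, $A_O\mapsto A_I'$, $B_I\mapsto B_O'$, $B_O\mapsto B_I'$ (and some partial transposes that leave the rank untouched), just $W$ again, now read as the Choi matrix of a map from $A_I'B_I'P$ to $A_O'B_O'F$. Purity of $W$ makes this map unitary, so $G_{A,B}$ is rank one, hence $\rank W=1$ and $W=\KetBra{w}$; and since the same relabeling identifies the operator attached to $\ket w$ by the CJ isomorphism with that unitary channel, $\ket w$ is the pure CJ vector of a unitary $U_w$, \ie, $W=\KetBra{U_w}$.

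The conceptual steps are painless; I expect the hard part to be purely notational --- in the ``only if'' direction, verifying that the swap probes genuinely reproduce $W$ under the link product while keeping straight which identity leg lives in $W$ versus in $A,B$ and tracking the partial-transposition conventions of Eq.~\eqref{eq:w_higher-order}, and in the ``if'' direction, being careful that the induced channel is a true unitary rather than merely a non-square isometry.
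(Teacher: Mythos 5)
Your proposal is correct and follows essentially the same route as the paper: the ``only if'' direction uses the identical SWAP-probe trick to identify $\mathcal G_{\mathcal A,\mathcal B}$ with $W$ itself (up to relabelling), and the ``if'' direction uses the same rank-one link-product observation, with your explicit dimension count merely spelling out the squareness of $U_g$ that the paper asserts before invoking trace preservation.
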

\begin{proof}
 If $W$ is pure, then $\mathcal G_{\mathcal A,\mathcal B}$ must be a unitary in particular when $\mathcal A$ and $\mathcal B$ are SWAPs that map $A_I',A_I$ to $A_O,A_O'$ and $B_I',B_I$ to $B_O,B_O'$. Then
\be
G_{A,B} = W*\de{\KetBra{\id}^{A_I'A_O}\KetBra{\id}^{A_IA_O'} \otimes \KetBra{\id}^{B_I'B_O}\KetBra{\id}^{B_IB_O'}} = W,
\ee
so the resulting transformation $\mathcal G_{\mathcal A,\mathcal B} = \choi^{-1}\de{G_{A,B}} = \choi^{-1}(W) =: \mathcal W$ is just the process itself, with the relabelling $A_I\mapsto A_O'$, $A_O\mapsto A_I'$, $B_I\mapsto B_O'$, and $B_O\mapsto B_I'$, so $\mathcal W$ must be a unitary transformation from $A_O,B_O,P$ to $A_I,B_I,F$. Writing $\mathcal W(\rho) = U_w \rho U_w^\dagger$, we have that its CJ representation is $W = \choi(\mathcal{W}) = \KetBra{U_w}$.

Conversely, if $W = \KetBra{U_w}$, $A = \KetBra{U_a}$, and $B = \KetBra{U_b}$, then
\be
G_{A,B} = \KetBra{U_w}*\de{\KetBra{U_a}\otimes \KetBra{U_b}} = \KetBra{U_g},
\ee
where
\be
U_g^{A_I'B_I'P} = \tr_{A_IB_I}\De{ \de{U_w^{PA_IB_I} \otimes \id^{A_I'B_I'}} \de{\id^{P}\otimes U_a^{A_IA_I'} \otimes U_b^{B_IB_I'} } }.
\ee
Since by assumption $\mathcal G_{\mathcal A,\mathcal B}$ is trace preserving, we have that $\tr_{A_O'B_O'F} G_{A,B} =\id^{A_I'B_I'P}$. Substituting $G_{A,B} = \KetBra{U_g}$ we get
\be
\tr_{A_O'B_O'F} \KetBra{U_g} = U_g^\dagger U_g = \id^{A_I'B_I'P}.
\ee
Since, furthermore, $U_g$ is a square finite-dimensional matrix, it has a right inverse which is equal to its left inverse, so $U_g$ is a unitary, and so is $\mathcal G_{\mathcal A,\mathcal B}$.
\end{proof}

A different definition of purity for process matrices was used in the Appendix A of Ref.~\cite{araujo15witnessing}: there they defined pure processes simply as those that can be written as vectors, i.e., that are proportional to rank-one projectors. We are not going to use this definition because rank-one processes do not necessarily induce reversible transformations between quantum states, and therefore fail to capture the essential feature we want from ``purity''.  This is because for some isometries $V_w$ the matrix $\KetBra{V_w}$ is a valid rank-one process, that however maps unitaries $\mathcal A$ and $\mathcal B$ to a non-unitary isometry $\mathcal G_{\mathcal A,\mathcal B}$, which increases the dimension of the Hilbert space and is therefore not reversible. Moreover, this definition is trivial in the sense that it would make every process purifiable, as can be seen from the discussion in Section \ref{sec:ifandonlyif}.

\section{A purification postulate}\label{sec:postulate}

With the definition of a pure process in hand, we can now define what it means to purify a process:
\begin{definition}\label{def:purification}
 A process $W \in P \otimes A_I \otimes A_O \otimes B_I \otimes B_O \otimes F$ is purifiable if one can recover it from a pure process $S \in P \otimes P' \otimes A_I \otimes A_O \otimes B_I \otimes B_O \otimes F \otimes F'$ by inputting the state\footnote{We can fix the state to be $\ket{0}$ instead of allowing an arbitrary state without loss of generality.} $\ket{0}$ in $P'$ and tracing out $F'$, \ie, if 
\be\label{eq:purification}
W = S*(\ketbra{0}^{P'}\otimes \id^{F'}).
\ee
\end{definition}
Note that there are no restrictions on the dimensions of $P, P', F$, and $F'$, so for instance a pure process $W$ is trivially purifiable by setting $S = W$ and $d_{P'} = d_{F'} = 1$. One can also ask whether a process $W$ with trivial $P$ and $F$ is purifiable by setting $d_P = d_F = 1$, as shown in Fig.~\ref{fig:purification}. This case is in fact the focus of this paper, as most processes considered in the literature have trivial $P$ and $F$ and those are the ones we shall test for purifiability.

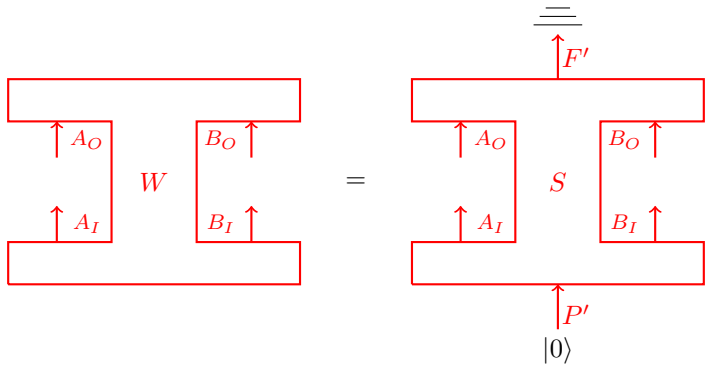
\begin{figure}[htb]
 \centering
          \begin{tikzpicture}[scale=1.6,baseline=(W.center)]
		\node[] (alice) at (-0.8,0) {};
		\node[red, minimum width=0.7cm,minimum height=0.85cm] (W) at (0,0) {$W$};
		\node[] (bob) at (0.8,0) {};
                \node[] (F) at (0,1.3) {};
                \node[] (P) at (0,-1.3) {};
                 \draw[red, thick] (-1.2,-0.85) -- (1.2,-0.85) -- (1.2,-0.5) -- (0.35,-0.5) -- (0.35,0.5) -- (1.2,0.5) -- (1.2,0.85) -- (-1.2,0.85) -- (-1.2, 0.5) -- (-0.35,0.5) -- (-0.35,-0.5) -- (-1.2,-0.5) -- (-1.2,-0.85);
                \draw[thick, red, ->] ([shift={(0.0cm,0.2cm)}]alice.center) -- (-0.8,0.5);
                \draw[thick, red, <-] ([shift={(0.0cm,-0.2cm)}]alice.center) -- (-0.8,-0.5);
                \draw[thick, red, ->] ([shift={(0.0cm,0.2cm)}]bob.center) -- (0.8,0.5);
                \draw[thick, red, <-] ([shift={(0.0cm,-0.2cm)}]bob.center) -- (0.8,-0.5);
		\node[red] (aliceO) at ([shift={(0.25,0.35)}]alice) {\footnotesize $A_O$};
		\node[red] (aliceI) at ([shift={(0.25cm,-0.35cm)}]alice) {\footnotesize $A_I$};
		\node[red] (bobO) at ([shift={(-0.25cm,0.35cm)}]bob) {\footnotesize $B_O$};
		\node[red] (bobI) at ([shift={(-0.25cm,-0.35cm)}]bob) {\footnotesize $B_I$};
  \end{tikzpicture}
  \begin{tikzpicture}[scale=1.6,baseline=(equal.center)]
  \node[] (equal) at (0,0) {$\quad = \quad$};
  \end{tikzpicture}
          \begin{tikzpicture}[scale=1.6,baseline=(W.center)]
		\node[] (alice) at (-0.8,0) {};
		\node[red, minimum width=0.7cm,minimum height=0.85cm] (W) at (0,0) {$S$};
		\node[] (bob) at (0.8,0) {};
                \node[] (F) at (0,1.3) {};
                \node[] (P) at (0,-1.3) {};
                \draw[black] (-0.2,1.3) -- (0.2,1.3);
                \draw[black] (-0.15,1.37) -- (0.15,1.37);
                \draw[black] (-0.1,1.44) -- (0.1,1.44);
                 \draw[red, thick] (-1.2,-0.85) -- (1.2,-0.85) -- (1.2,-0.5) -- (0.35,-0.5) -- (0.35,0.5) -- (1.2,0.5) -- (1.2,0.85) -- (-1.2,0.85) -- (-1.2, 0.5) -- (-0.35,0.5) -- (-0.35,-0.5) -- (-1.2,-0.5) -- (-1.2,-0.85);
                \draw[thick, red, ->] ([shift={(0.0cm,0.2cm)}]alice.center) -- (-0.8,0.5);
                \draw[thick, red, <-] ([shift={(0.0cm,-0.2cm)}]alice.center) -- (-0.8,-0.5);
                \draw[thick, red, ->] ([shift={(0.0cm,0.2cm)}]bob.center) -- (0.8,0.5);
                \draw[thick, red, <-] ([shift={(0.0cm,-0.2cm)}]bob.center) -- (0.8,-0.5);
                \draw[thick, red, <-] (F) -- (0,0.85);
                \draw[thick, red, ->] (P) -- (0,-0.85);
		\node[red] (SO) at ([shift={(0.25,0.35)}]alice) {\footnotesize $A_O$};
		\node[red] (SI) at ([shift={(0.25cm,-0.35cm)}]alice) {\footnotesize $A_I$};
		\node[red] (bobO) at ([shift={(-0.25cm,0.35cm)}]bob) {\footnotesize $B_O$};
		\node[red] (bobI) at ([shift={(-0.25cm,-0.35cm)}]bob) {\footnotesize $B_I$};
                \node[red] (CI) at ([shift={(0.15cm,-0.27cm)}]F) {$F'$};
                \node[red] (CI) at ([shift={(0.15cm,0.2cm)}]P) {$P'$};
                \node[] (CI) at ([shift={(0.0cm,-0.1cm)}]P) {$\ket{0}$};
  \end{tikzpicture}
  \caption{A process $W$ with trivial $P$ and $F$ is purifiable if it can be recovered from a pure process $S$ by inputting the state $\ket{0}$ in $P'$ and tracing out $F'$.}\label{fig:purification}
\end{figure}

We propose then the purification postulate: \textit{a process is physical only if it is purifiable}. This postulate is motivated by the fact that a non-purifiable process would fundamentally map unitaries onto isometries or non-unitary CPTP maps, destroying the reversibility of the theory. Reversibility, in its turn, is a cherished principle \cite{chiribella10}: all fundamental physical theories are time reversible\footnote{Note that to treat the collapse of the wave function during a measurement as an irreversible process one needs to use objective collapse models \cite{ghirardi86,ghirardi90,bassi13} instead of standard quantum mechanics. We are taking the view that collapse is not a physical process.} and the hint that the formation and evaporation of a black hole might be an irreversible evolution is one of the major problems in modern physics \cite{hawking76,harlow14}. Moreover, in all reconstructions of quantum mechanics to date \cite{hardy01,dakic09,masanes11,chiribella11,barnum14,hoehn14b,hoehn15} reversibility has been used as a central ingredient. This supports the idea that it is indeed a fundamental part of quantum mechanics, and it should not be done away with lightly. Furthermore, although there is speculation that in a full theory of quantum gravity new degrees of freedom are created by the expansion of the Universe \cite{jacobson99,bojowald07,gielen16}, in concrete models these degrees of freedom are born in the vacuum state \cite{hoehn14}, making the time evolution actually reversible. In any case, current inflationary models using quantum field theory on a curved expanding spacetime do not feature creation of new degrees of freedom \cite{mukhanov05}.

As a sanity check, note that all processes known to be physical -- all processes where the order between the parties is fixed, \eg $\ket{w_\text{channel}}$ (equation \eqref{eq:w_channel}), controlled by a quantum system, \eg $\ket{w_\text{switch}}$ (equation \eqref{eq:wswitchket}), or incoherently mixed -- are purifiable.

\section{Necessary and sufficient conditions for purification}\label{sec:ifandonlyif}

We shall now derive necessary and sufficient conditions for a process $W$ with trivial $P$ and $F$ to be purifiable, \ie, to be  recoverable via equation \eqref{eq:purification} from a pure process $S$. From Theorem \ref{thm:pure_representation}, we know that $S$ is pure iff $S = \KetBra{U_s}^{P'A_IA_OB_IB_OF'}$ for some unitary $U_s$. Defining 
\be
\ket{w_\psi}^{A_IA_OB_IB_OF'} := \bra{\psi^*}^{P'} \Ket{U_s}^{P'A_IA_OB_IB_OF'}
\ee
and noting that
\be
\KetBra{U_s}^{P'A_IA_OB_IB_OF'}*(\ketbra{\psi}^{P'}\otimes \id^{F'}) = \ketbra{w_\psi}^{A_IA_OB_IB_OF'} * \id^{F'} = \tr_{F'} \ketbra{w_\psi}^{A_IA_OB_IB_OF'},
\ee
we can rewrite equation \eqref{eq:purification} as
\begin{equation}\label{eq:stinespring_pure}
W^{A_IA_OB_IB_O} = \tr_{F'} \ketbra{w_0}^{A_IA_OB_IB_OF'},
\end{equation}
where $\ket{w_0}^{A_IA_OB_IB_OF'} = \bra{0}^{P'} \Ket{U_s}^{P'A_IA_OB_IB_OF'}$. 

We can also state the condition that $\Ket{U_s}$ is a valid process purely in terms of the vectors\footnote{From now on we shall write $\Ket{U_s}^{P'A_IA_OB_IB_OF'}$ and $\ket{w_\psi}^{A_IA_OB_IB_OF'}$ without their superscripts for brevity.} $\ket{w_\psi}$. To do this, first note that, as discussed in Appendix \ref{sec:valid_w_purification}, we only need to consider the ancillary spaces $A_I',A_O',B_I'$, and $B_O'$ in the definition of processes to conclude that a process matrix must be positive semidefinite. Since $\KetBra{U_s}$ must already be positive semidefinite from its form, the definition of process reduces to saying that $\KetBra{U_s}$ is valid iff $\KetBra{U_s}*(A\otimes B)$ is a CPTP map for all CPTP maps $A \in A_I \otimes A_O$ and $B \in B_I \otimes B_O$. 

In its turn, $\KetBra{U_s}*(A\otimes B)$ is a CPTP map iff for all input states $\ket{\psi}^{P'}$ its output $\Big[\KetBra{U_s}*(A\otimes B)\Big]*\ketbra{\psi}^{P'}$ is a valid quantum state. Note, however, that 
\be \Big[\KetBra{U_s}*(A\otimes B)\Big]*\ketbra{\psi}^{P'} = \Big(\KetBra{U_s}*\ketbra{\psi}^{P'}\Big)*(A\otimes B),\ee
and the condition that the right hand side is a valid quantum state for all $\ket{\psi}$, $A$, and $B$ is precisely the condition that the process with trivial $P'$ given by $(\KetBra{U_s}*\ketbra{\psi}^{P'}) = \ketbra{w_\psi}$ produces a valid CPTP map (with trivial $P'$) when linked with the CPTP maps $A$ and $B$, which means that $\ketbra{w_\psi}$ is a valid process. Writing out the validity conditions explicitly, we have
\begin{equation}\label{eq:validity_pure}
\forall\ket{\psi}\quad \ketbra{w_\psi} = L_V(\ketbra{w_\psi})\mathand\tr \ketbra{w_\psi} = d_{A_O}d_{B_O},
\end{equation}
where $L_V$ is the projector onto the valid subspace defined in equation \eqref{eq:projector_lv} particularized for $d_{P'} = 1$.

With this, we reduce the problem of purifying a process $W$ to that of finding a set of vectors $\ket{w_\psi}$ such that equations \eqref{eq:stinespring_pure} and \eqref{eq:validity_pure} are satisfied. We can simplify it further by noting that equation \eqref{eq:stinespring_pure} is just the purification of a positive matrix, which admits a simple solution. Let then the eigendecomposition of $W$ be 
\be
W = \sum_{i=0}^{r-1} \lambda_i \ketbra{\lambda_i}^{A_IA_OB_IB_O},
\ee
where $r$ is the rank of $W$, and $\lambda_i, \ket{\lambda_i}$ its nonzero eigenvalues and corresponding eigenvectors. Then a valid purification for it is
\be\label{eq:w0}
\ket{w_0} = \sum_{i=0}^{r-1} \sqrt{\lambda_i}\ket{\lambda_i}^{A_IA_OB_IB_O}\ket{i}^{F'},
\ee
which is unique modulo some isometry $V$ on the purifying system. But this isometry has no effect on the validity of $\ketbra{w_\psi}$, as it only affects the state output by the process, and an isometry maps valid quantum states to valid quantum states. This implies that we can choose without loss of generality the isometry to be the identity, and the dimension of the purifying system $d_{F'} = d_{P'}$ to be the rank of $W$. This allows us to use an $r^2$-dimensional basis for $P'$ to simplify condition \eqref{eq:validity_pure}. Choosing $\{\ket{i}\bra{j}\}_{i,j=0}^{r-1}$ as this basis, and rewriting the equations explicitly in terms of the vectors $\ket{w_i} = \bra{i}^{P'} \Ket{U_s}^{P'A_IA_OB_IB_OF'}$, the condition translates to
\be
\forall i,j \quad L_V^\perp(\ket{w_i}\bra{w_j}) = 0 \mathand \braket{w_i}{w_j} = d_{A_O}d_{B_O} \delta_{ij}, 
\ee
where for brevity we are using the projector $L_V^\perp := \id - L_V$.

We summarize the results of this Section in the following theorem:
\begin{theorem}\label{thm:purification}
 A process matrix $W$ of rank $r$ and eigendecomposition
 \be
 W = \sum_{i=0}^{r-1} \lambda_i \ketbra{\lambda_i}^{A_IA_OB_IB_O}
 \ee
 is purifiable if and only if there exists a set of vectors $\{\ket{w_i}\}_{i=0}^{r-1}$ such that
 \be
 \ket{w_0} = \sum_{i=0}^{r-1} \sqrt{\lambda_i}\ket{\lambda_i}^{A_IA_OB_IB_O}\ket{i}^{F'}
 \ee
 and
 \be \forall i,j \quad L_V^\perp(\ket{w_i}\bra{w_j}) = 0 \mathand \braket{w_i}{w_j} = d_{A_O}d_{B_O} \delta_{ij}.
 \ee
 If they exist, a process $S = \KetBra{U_s}$ that purifies $W$ is given by
 \be
 \forall i \quad \bra{i}^{P'}\Ket{U_s}^{P'A_IA_OB_IB_OF'} := \ket{w_i}
 \ee
 \end{theorem}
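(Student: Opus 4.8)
The theorem just repackages the reductions carried out in this section, so the plan is to verify that each reduction is reversible. I would start from Theorem~\ref{thm:pure_representation}, which says a pure process is exactly $\KetBra{U_s}$ for a unitary $U_s$, and from the fact established above that setting $\ket{w_\psi} := \bra{\psi^*}^{P'}\Ket{U_s}$ turns the purification identity \eqref{eq:purification} into \eqref{eq:stinespring_pure}, namely $W = \tr_{F'}\ketbra{w_0}$, and turns the validity of $S$ into \eqref{eq:validity_pure}, namely that $\ketbra{w_\psi}$ is a valid process for every $\ket{\psi}$. For the ``if'' direction, given the vectors $\ket{w_i}$ I would define $\Ket{U_s}$ by $\bra{i}^{P'}\Ket{U_s}:=\ket{w_i}$, so that $d_{P'}=r$, and then check three things: (i) the orthogonality relation $\braket{w_i}{w_j}=d_{A_O}d_{B_O}\delta_{ij}$, together with $\tr\KetBra{U_s}=r\,d_{A_O}d_{B_O}$ and a dimension count, makes $U_s$ (proportional to) a unitary, so $S:=\KetBra{U_s}$ is pure by Theorem~\ref{thm:pure_representation}; (ii) condition~1 gives $\ket{w_0}=\sum_i\sqrt{\lambda_i}\ket{\lambda_i}\ket{i}^{F'}$, hence $\tr_{F'}\ketbra{w_0}=\sum_i\lambda_i\ketbra{\lambda_i}=W$, which is \eqref{eq:stinespring_pure} and therefore \eqref{eq:purification}; (iii) expanding $\ket{w_\psi}=\sum_i\psi_i\ket{w_i}$ and using that $L_V^\perp$ is linear, condition~2 forces \eqref{eq:validity_pure} for every $\psi$, so $S$ is a valid process. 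Thus $S$ is a valid pure process purifying $W$, and it is exactly the one displayed in the theorem.

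For the ``only if'' direction I would assume $W$ is purifiable by a valid pure $S=\KetBra{U_s}$ and again set $\ket{w_\psi}:=\bra{\psi^*}^{P'}\Ket{U_s}$. The reductions above then say that $\ket{w_0}$ purifies $W\ge 0$ and that every $\ketbra{w_\psi}$ is a valid process. A priori $\ket{w_0}$ is just some purification of $W$, so it equals $(\id^{A_IA_OB_IB_O}\otimes V)\bigl(\sum_i\sqrt{\lambda_i}\ket{\lambda_i}\ket{i}\bigr)$ for an isometry $V$ into $F'$; I would argue that this isometry can be absorbed, i.e.\ that replacing $U_s$ by its composition with $V^\dagger$ on $F'$ (equivalently, applying $V^\dagger$ to all the $\ket{w_\psi}$) changes neither $\tr_{F'}\ketbra{w_\psi}$ nor the validity of $\ketbra{w_\psi}$, because $V$ acts only on the global-future output of the induced map and an isometry there preserves complete positivity and normalisation. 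Hence without loss of generality $\ket{w_0}=\sum_i\sqrt{\lambda_i}\ket{\lambda_i}\ket{i}^{F'}$ with $d_{F'}=r$, and the same counting lets us take $d_{P'}=r$. Choosing the basis $\{\ket{i}\}_{i=0}^{r-1}$ of $P'$ and putting $\ket{w_i}:=\bra{i}^{P'}\Ket{U_s}$, condition~1 holds by construction, while condition~2 follows from ``$\ketbra{w_\psi}$ valid for all $\psi$'' by polarisation: writing $\ket{w_\psi}=\sum_i\psi_i\ket{w_i}$ one has $L_V^\perp(\ketbra{w_\psi})=\sum_{ij}\psi_i\overline{\psi_j}\,L_V^\perp(\ket{w_i}\bra{w_j})$ and $\tr\ketbra{w_\psi}=\sum_{ij}\psi_i\overline{\psi_j}\braket{w_j}{w_i}$, and requiring the first to vanish and the second to equal $d_{A_O}d_{B_O}$ for every $\psi$ forces $L_V^\perp(\ket{w_i}\bra{w_j})=0$ and $\braket{w_j}{w_i}=d_{A_O}d_{B_O}\delta_{ij}$. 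This closes the equivalence, and the formula for the purifying $U_s$ is the one already used in the ``if'' direction.

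Almost everything here is bookkeeping inherited from the body of the section; I expect the two steps with real content to be, first, the ``the purifying isometry is without loss of generality the identity'' argument in the ``only if'' direction — which needs the observation that the freedom in a purification lives entirely on the $F'$ register and hence commutes with every process-validity constraint, together with the accompanying collapse of $d_{P'}$ and $d_{F'}$ down to $\rank W$ — and, second, the polarisation step that turns the quantifier over all input states $\ket{\psi}$ into the finite family of conditions indexed by $i,j$, which relies on $L_V^\perp$ being linear and on a sesquilinear form vanishing on all diagonal arguments vanishing identically. A secondary point to keep track of is the dimension identity $d_{A_O}d_{B_O}\,d_{P'}=d_{A_I}d_{B_I}\,d_{F'}$ needed for the constructed $U_s$ to be genuinely unitary rather than merely an isometry; with $d_{P'}=d_{F'}=\rank W$ this amounts to $d_{A_I}d_{B_I}=d_{A_O}d_{B_O}$, which one should note holds for the processes being tested.
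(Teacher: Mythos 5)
Your proposal is correct and follows essentially the same route as the paper: Theorem~\ref{thm:pure_representation}, the reduction of the validity of $S$ to the validity of every $\ketbra{w_\psi}$, absorption of the purification isometry on $F'$, and polarisation over $\ket{\psi}$ to obtain the finite family of conditions indexed by $i,j$ --- which is exactly the chain of reductions the paper carries out in Section~\ref{sec:ifandonlyif} before stating the theorem as a summary. Your explicit attention to the dimension count $d_{A_O}d_{B_O}d_{P'}=d_{A_I}d_{B_I}d_{F'}$ needed for the constructed $U_s$ to be square (hence unitary rather than merely isometric) is a detail the paper leaves implicit.
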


\section{Necessary condition for purification}\label{sec:onlyif}

It is not easy to find a set of vectors $\{\ket{w_i}\}$ that satisfies the conditions of Theorem \ref{thm:purification}, as the constraint $L_V^\perp(\ket{w_i}\bra{w_j}) = 0$ is quadratic on them. We shall not solve this problem in full, but rather prove an upper bound on the number of vectors that can satisfy those conditions for a given process. If this upper bound is smaller than the rank of the process, this is a proof that this process is not purifiable.

We shall start by characterising the vector space $V_W$ formed by the vectors $\ket{v}$ such that $L_V^\perp(\ket{v}\bra{w_0}) = 0$. Since this condition is linear on $\ket{v}$ it is straightforward to do it. Furthermore, since this is one of the conditions in Theorem \ref{thm:purification}, it should be clear that the set of vectors $\{\ket{w_i}\}$, if it exists, must belong to $V_W$. Restricting our attention to this (hopefully small) subspace makes it easier to consider the other, non-linear, conditions.

To construct an orthonormal basis for $V_W$, first we transform the projector $L_V^\perp$ to act on the ``double-kets'' of process matrices, \ie, we define the matrix $\Pi_{L_V^\perp}$ such that\footnote{$\Pi_{L_V^\perp}$ can be explicitly written as $\Pi_{L_V^\perp} = \choi(L_V^\perp)^R$, where $R$ is the reshuffling operation \cite{zyczkowski06}.}
\be
\forall M \in A_I\otimes A_O\otimes B_I\otimes B_O\otimes F' \quad \Pi_{L_V^\perp} \Ket{M} = \bigKet{L_V^\perp(M)}.
\ee
We have then 
\be
L_V^\perp(\ket{v}\bra{w_0}) = 0 \iff \Pi_{L_V^\perp} \ket{w_0^*}\ket{v} = 0,
\ee
so if we define
\be
O_W := \big(\bra{w_0^*} \otimes \id\big) \Pi_{L_V^\perp} \big(\ket{w_0^*} \otimes \id\big)
\ee
we have that
\be
O_W \ket{v} = 0 \iff \Pi_{L_V^\perp} \ket{w_0^*}\ket{v} = 0,
\ee
so the null eigenvectors of $O_W$ span $V_W$, and we can use them to restrict the projector $\Pi_{L_V^\perp}$ to the subspace $V_W^* \otimes V_W$. 

To do that, let $\{\ket{r_i}\}_{i=0}^{\dim(V_W)-1}$ be the (orthonormal) null eigenvectors of $O_W$, and let $\{\ket{i}\}_{i=0}^{\dim(V_W)-1}$ be a generic computational basis of dimension $\dim(V_W)$. The restriction is then done via the isometry $R = \sum_{i=0}^{\dim(V_W)-1} \ket{i}\bra{r_i}$, which maps a vector space of dimension $d_{A_I}d_{A_O}d_{B_I}d_{B_O}\rank(W)$ to a vector space of dimension $\dim(V_W)$, usually drastically smaller. The restricted operator $\Pi_{L_V^\perp|V_W}$ is then given by
\be
\Pi_{L_V^\perp|V_W} = (R^* \otimes R)\ \Pi_{L_V^\perp}\, (R^T \otimes R^\dagger).
\ee
Using $\Pi_{L_V^\perp|V_W}$ we can now particularize the condition that $L_V^\perp(\ket{a}\bra{b}) = 0$ to vectors $\ket{a}$ and $\ket{b}$ inside $V_W$. Let $\ket{t} = R \ket{a}$ and $\ket{u} = R\ket{b}$. Then $L_V^\perp(\ket{a}\bra{b}) = 0$ iff $\Pi_{L_V^\perp|V_W}\ket{u^*}\ket{t} = 0$. Let then $\{\ket{m_k}\}_{k=0}^{d_m-1}$ be the set of eigenvectors of $\Pi_{L_V^\perp|V_W}$ with nonzero eigenvalue\footnote{It is possible that this set is empty. In this case all vectors in $V_W$ respect the conditions of Theorem \ref{thm:purification}, and the process is purifiable iff $\dim(V_W) \ge \rank(W)$.}. Then
\be
\Pi_{L_V^\perp|V_W}\ket{u^*}\ket{t} = 0 \iff\forall k \quad \bra{u^*}\bra{t}\ket{m_k} = 0.
\ee
We want to rewrite the inner product $\bra{u^*}\bra{t}\ket{m_k}$ in a more convenient way. For that, note that $\bra{\psi}A\ket{\phi} = \bra{\phi^*}\bra{\psi}\Ket{A^T}$, so $\bra{u^*}\bra{t}\ket{m_k} = \bra{t}M_k\ket{u}$ for matrices $M_k$ such that $\Ket{M_k^T} = \ket{m_k}$. These matrices are in general not Hermitian, so for convenience we define $C_k = M_k + M_k^\dagger$ and $C_{k+d_m} = i(M_k-M_k^\dagger)$. Then
\be
\Pi_{L_V^\perp|V_W}\ket{u^*}\ket{t} = 0 \iff\forall k \quad \bra{t}C_k\ket{u}=0.
\ee
Let now $d_{C_k}$ be the dimension of the largest subspace such that for all vectors $\ket{t},\ket{u}$ in this subspace $\bra{t}C_k\ket{u}=0$. We can easily calculate $d_{C_k}$ using the the null-square lemma proved in Appendix \ref{sec:nullsquare}: let $n_{k+},n_{k-}$, and $n_{k0}$ be the number of positive, negative, and null eigenvalues of $C_k$. Then 
\be
d_{C_k} = n_{k0} + \min \{n_{k+},n_{k-}\},
\ee
and a simple upper bound on the dimension of the largest subspace of vectors that respect the conditions of Theorem \ref{thm:purification} is
\begin{equation}
 d_\text{max}(W) := \min_i d_{C_i}.
\end{equation}
Therefore if $d_\text{max}(W) < \rank(W)$ the process $W$ is not purifiable.

\section{Examples}\label{sec:examples}

We shall now apply the necessary condition derived in the previous Section to several process matrices from the literature. The calculation of $d_\text{max}$ for all the matrices in this Section was done numerically with the MATLAB code available as the ancillary file \texttt{purification.m} on the arXiv.

In this Section we shall omit superscripts that identify subsystems and tensor products to avoid clutter. The expression (\eg) $Z \id X Z$ should be understood as $Z^{A_I} \otimes \id^{A_O} \otimes X^{B_I} \otimes Z^{B_O}$.

Let \begin{equation} \label{eq:w_ocb}
  W_\text{OCB} = \frac{1}{4} \left[ \id\id\id\id + \frac{\id Z Z \id + Z \id X Z }{\sqrt{2}} \right]
\end{equation}
be the process introduced in Ref.~\cite{oreshkov12}. It was proven, under a restriction on the allowed operations, to produce the maximal violation of the original causal inequality for any dimension \cite{brukner14}. Since $\rank(W_{\text{OCB}}) = 8$ and $d_\text{max}(W_{\text{OCB}}) = 5$ it is not purifiable.

Let \begin{multline}
W_\text{max} = \frac14 \Big[ \id\id\id\id +  a_0 \, Z\id Z\id  - a_1\big(Z\id\id\id + \id\id Z\id\big) 
- a_2\big(Z\id\id Z + \id Z Z\id\big) + a_3\big(Z\id ZZ + ZZZ\id\big) \\
+ a_4\big(Z\id XX - Z\id YY  + XXZ\id - YYZ\id\big)  \Big],
\end{multline}
where $a_0 \approx 0.2744$, $a_1 \approx 0.2178$, $a_2 \approx 0.3628$, $a_3 \approx 0.3114$, and $a_4 \approx 0.2097$. This process was introduced in Appendix C of Ref.~\cite{branciard15simplest} and conjectured to produce the maximal violation of GYNI and LGYNI inequalities for its dimension. Since $\rank(W_{\text{max}}) = 11$ and $d_\text{max}(W_{\text{max}}) = 10$ it is not purifiable.

Let \begin{equation} \label{eq:w_opt}
  W_\text{opt} = \frac{1}{4} \left[ \id\id\id\id + \frac1{\sqrt3} \, Z\id X Z +  \frac{\sqrt3-1}3 \big(\id X X \id + \id Y Y \id + \id Z Z \id\big) \right]
\end{equation}
be the process introduced in Ref.~\cite{feix16}. By itself it cannot violate any causal inequalities, but it becomes able to do so when it is extended with an entangled ancilla. However, when admixed with a small amount
of noise, the known violations disappear, so this noisy version was conjectured to be unable to violate any causal inequalities \cite{feix16}, that is, to be ``extensibly causal'' \cite{oreshkov15}. Since extensibly causal processes seem physically reasonable, but we have no physical interpretation for $W_\text{opt}$ or its noisy version, it would very interesting to test its purifiability to gather evidence about its physicality. Unfortunately, since $\rank(W_{\text{opt}}) = 12$ and $d_\text{max}(W_{\text{opt}}) = 17$ our test is inconclusive. This is also the case for the noisy version, which has rank $16$ and $d_\text{max}$ equal to $41$.

\section{Counter-example}\label{sec:counterexample}

Since the processes examined in the previous Section either are not purifiable or have unknown status, it might be tempting to conjecture that all processes that can violate causal inequalities are not purifiable. While this might be true for the bipartite case, there exists a counter-example for the general case: a tripartite process which can violate causal inequalities and is purifiable. Note that although Definition \ref{def:pure}, Theorem \ref{thm:pure_representation}, and Definition \ref{def:purification} were stated explicitly only for the bipartite case, their multipartite generalisation is straightforward, and we shall use it in this Section.

This process, due to Ref.~\cite{araujo14private}, and first published in Ref.~\cite{baumeler16}, can be concisely described as a function from $A_O,B_O,C_O$ to $A_I,B_I,C_I$ that incoherently maps the basis states $\ket{a,b,c}$ to the basis states $\ket{\lnot b \land c,\lnot c \land a,\lnot a \land b}$, where $\lnot$ and $\land$ represent logical negation and logical conjunction. It can be represented in a more cumbersome way as the process matrix
\be
W_\text{det} = \sum_{abc} \ketbra{a,b,c} \otimes \ketbra{\lnot b \land c,\lnot c \land a,\lnot a \land b},
\ee
where unlike in the other processes, we wrote the subsystems in the order $A_O$,$B_O$,$C_O$,$A_I$,$B_I$,$C_I$, to match the function described above. The purification of this process, due to Ref.~\cite{baumeler15private}, can be done with the standard trick to turn an irreversible function into a reversible one, that is, changing the function from $\ket{x} \mapsto \ket{f(x)}$ to $\ket{x}\ket{y} \mapsto \ket{x}\ket{y\oplus f(x)}$. In terms of our function this means mapping $\ket{a,b,c}\ket{i,j,k}$ to $\ket{a,b,c}\ket{i\oplus\lnot b \land c, j\oplus\lnot c \land a, k\oplus \lnot a \land b}$. This reversible function is then the purification of $W_\text{det}$, and can be written as the process vector
\be\label{eq:wdetket}
\ket{w_\text{det}} = \sum_{\substack{abc\\ijk}}\ket{a,b,c}\ket{i,j,k} \otimes \ket{a,b,c}\ket{i\oplus\lnot b \land c, j\oplus\lnot c \land a, k\oplus \lnot a \land b},
\ee
where the subsystems are written in the order $A_O$,$B_O$,$C_O$,$P$,$F$,$A_I$,$B_I$,$C_I$. Note that unlike the other subsystems, $P$ and $F$ consist of three qubits each.

\section{Conclusion}

We have proposed purifiability as a necessary condition for the physicality of a process, motivated by considerations that are independent of the process matrix formalism. This allows us for the first time to declare a large class of processes to be unphysical. This can be made part of an axiomatic, top-down approach to determining the set of physical processes. Ideally, one should marry it with a bottom-up approach, that shows processes to be physical by actually constructing laboratory implementations for them. As of yet, however, there are still processes which are neither excluded by the purification principle nor known to be implementable, showing that more work must be done in order to consummate this marriage.

In this grey zone we find, for example, $\ket{w_\text{det}}$ (equation \eqref{eq:wdetket}), showing that the set of purifiable processes is rather rich, and that the purification principle does not rule out the violation of causal inequalities in Nature. It might still be the case, however, that processes that cannot violate causal inequalities, \ie, extensibly causal processes, are always purifiable. In order to decide this, it would be interesting to know whether the noisy version of $W_\text{opt}$ (equation \eqref{eq:w_opt}) is purifiable.

\section{Acknowledgements}

We thank Ämin Baumeler, Fabio Costa, Paolo Perinotti, and Stefan Wolf for useful discussions. We acknowledge support from the Austrian Science Fund (FWF) through the Special Research Programme FoQuS, the Doctoral Programme CoQuS and the projects No. P-24621 and I-2526. This work has been supported by the Excellence Initiative of the German Federal and State Governments (Grant ZUK 81). This publication was made possible through the support of a grant from the John Templeton Foundation. The opinions expressed in this publication are those of the authors and do not necessarily reflect the views of the John Templeton Foundation. 

\printbibliography

\appendix

\section{The Choi-Jamiołkowski isomorphism}\label{sec:choi}

We use two levels of the Choi-Jamiołkowski isomorphism. The first level is the ``pure'' CJ isomorphism, used to represent matrices as vectors. The CJ operator of a matrix $A:\mathcal H_I\to\mathcal H_O$ is its ``double-ket'' \cite{royer_wigner_1991,braunstein_universal_2000}:
\be
\Ket{A} := \id \otimes A \Ket{\id} = \sum_{i=0}^{d_{\mathcal H_I}-1} \ket{i}A\ket{i}.
\ee
The second level is the ``mixed'' CJ isomorphism, used to represent linear operators that act on matrices as matrices themselves. The CJ operator of a map $\mathcal M : \mathcal L(\mathcal H_I) \to \mathcal L(\mathcal H_O)$ is
\be 
\choi(\mathcal M) := \mathcal I \otimes \mathcal M (\KetBra{\id}) = \sum_{i,j=0} ^{d_{\mathcal H_I}-1}\ket{i}\bra{j} \otimes \mathcal M (\ket{i}\bra{j}).
\ee
Note that this convention differs from the one in \cite{oreshkov12,araujo15witnessing} by a transposition. This is necessary to avoid representing completely positive maps with a non-positive matrix, as would happen with the convention from \cite{oreshkov12,araujo15witnessing} when, for example, we take in equation \eqref{eq:w_higher-order} the process as being identity from $P$ to $A_I$ and identity from $A_O$ to $F$, and the map $\mathcal A$ to be a SWAP.

\section{The link product}\label{sec:link}

The motivation for defining the link product is to express the CJ operator of a composition of two linear maps $\mathcal M$ and $\mathcal N$ directly as a function of the CJ operators of the individual maps \cite{chiribella09b}, as 
\be
\choi(\mathcal N\circ \mathcal M) = \choi(\mathcal N) * \choi(\mathcal M).
\ee
We do not need, however, to explicitly use the CJ isomorphism, as the link product can be more simply defined as an abstract operation on two matrices. Namely, the link product between two operators $M \in \bigotimes_{i\in I} A^i$ and $N \in \bigotimes_{i\in J} A^i$ is
\be
N * M := \tr_{I\cap J} [(\id^{J\setminus I} \otimes M^{T_{I\cap J}})(N \otimes \id^{I\setminus J})],
\ee
where $\cdot^{T_{I\cap J}}$ denotes partial transposition on the subsystems $I\cap J$.

Note that when $I\cap J = \emptyset$ the link product reduces to $N*M = N\otimes M$ and, conversely, when $I\cap J = I \cup J$, it reduces to $N*M = \tr M^T N$.

\section{Conditions for validity}\label{sec:valid_w_purification}

In this Appendix we derive the constraints on $W$ such that the map 
\be
G_{A,B} =  W*(A\otimes B)
\ee
is a valid CPTP map for all valid CPTP maps $A$ and $B$. This derivation follows closely the one presented in Appendix B of Ref.~\cite{araujo15witnessing}. Recall that a linear map $\mathcal A : A_I \to A_O$ is completely positive and trace preserving if and only if its CJ operator $A = \choi(\mathcal A)$ is positive and $\tr_{A_O}A = \id^{A_I}$. Therefore we need that
\begin{gather} 
\tr_{A_O' B_O' F} W*(A\otimes B) = \id^{A_I' B_I' P} \mathand W*(A\otimes B) \ge 0 \\
\forall A,B \quad \text{s.t.} \quad A\ge 0,\quad\tr_{A_OA_O'}A=\id^{A_IA_I'} \\
B\ge 0,\quad\tr_{B_OB_O'}B=\id^{B_IB_I'}.
\end{gather}
First we show that $W*(A\otimes B) \ge 0$ iff $W \ge 0$. To see that, note that when $\mathcal A$ and $\mathcal B$ are swaps we have $W*(A\otimes B) = W$, so $W \ge 0$ is a necessary condition for $G_{A,B} \ge 0$. To see that it is also sufficient, just note that the link product of two positive operators is always positive.

To get the constraints on $W$ such that $\tr_{A_O' B_O' F} W*(A\otimes B) = \id^{A_I' B_I' P}$, first note that the positivity of $A,B$, and $W$ is not relevant for that, since the set of positive operators is a full-dimensional subset of the space of Hermitian operators. Therefore we only need the normalisation condition
\begin{gather} 
\tr_{A_O' B_O' F} W*(A\otimes B) = \id^{A_I' B_I' P} \\
\forall A,B \quad \text{s.t.} \quad \tr_{A_OA_O'}A=\id^{A_IA_I'},\quad\tr_{B_OB_O'}B=\id^{B_IB_I'}.
\end{gather}
Note also that the set of operators of the form $\tr_{A_I'A_O'}A$ for CPTP $A$ does not depend\footnote{It would if the CPTP maps $A$ were restricted to be unitaries.} on the dimension of $A_I'$ and $A_O'$, so we can without loss of generality set $d_{A_I'} = d_{A_O'} = d_{B_I'} = d_{B_O'} = 1$, simplifying this condition to
\begin{gather} 
\tr_{F} W*(A\otimes B) = \id^{P} \\
\forall A,B \quad \text{s.t.} \quad \tr_{A_O}A=\id^{A_I},\quad\tr_{B_O}B=\id^{B_I}.
\end{gather}
Note also that
\be
\tr_{A_O}A = \id^{A_I} \iff A = a - {}_{A_O}a + \frac{\id^{A_IA_O}}{d_{A_O}}
\ee
for some Hermitian operator $a$, where we are using the shorthand notation
\begin{equation}
 _X W = \frac{\id^{X}}{d_X} \otimes \tr_X W.
\end{equation}
Using this in Alice and Bob's sides allows us to reduce the normalization condition to
\be\label{eq:normalization_final}
\forall a,b\quad (\tr_F W)*\De{\de{a - {}_{A_O}a + \frac{\id^{A_IA_O}}{d_{A_O}}}\otimes \de{b - {}_{B_O}b + \frac{\id^{B_IB_O}}{d_{B_O}}}}= \id^{P}.
\ee
For $a=b=0$ this yields the condition
\be\label{eq:trace_condition}
(\tr_F W) * \de{\frac{\id^{A_IA_O}}{d_{A_O}}\otimes \frac{\id^{B_IB_O}}{d_{B_O}}} =  \id^{P}.
\ee
For $b=0$ and $a=0$, respectively, equation \eqref{eq:trace_condition} together with \eqref{eq:normalization_final} imply that
\begin{gather}
 \forall a\quad (\tr_F W)*\De{(a - {}_{A_O}a )\otimes \frac{\id^{B_IB_O}}{d_{B_O}}}= 0 \label{eq:local_loop_a} \\ 
 \forall b\quad (\tr_F W)*\De{\frac{\id^{A_IA_O}}{d_{A_O}}\otimes (b - {}_{B_O}b)}= 0 \label{eq:local_loop_b}
\end{gather}
which then imply that
\be\label{eq:global_loop}
\forall a,b\quad (\tr_F W)*[(a - {}_{A_O}a)\otimes (b - {}_{B_O}b)]= 0.
\ee
Using the fact that the maps ${}_{A_O}$ and ${}_{B_O}$ are self-adjoint under the Hilbert-Schmidt inner product, we see that conditions \eqref{eq:local_loop_a}-\eqref{eq:global_loop} are equivalent to
\begin{gather}
 {}_{B_IB_OF}W = {}_{A_OB_IB_OF}W \label{eq:local_a}, \\
 {}_{A_IA_OF}W = {}_{A_IA_OB_OF}W \label{eq:local_b}, \\
 {}_{F}W = {}_{A_OF}W + {}_{B_OF}W - {}_{A_OB_OF}W. \label{eq:global}
 \end{gather}
These conditions, together with \eqref{eq:trace_condition}, completely characterize the affine subspace to which valid process matrices belong. It is, however, convenient to rewrite them so that we characterize this affine subspace as a linear subspace together with a single constraint on the trace of $W$. To do that, note that conditions \eqref{eq:local_a}-\eqref{eq:global} define linear subspaces, so we only need to rewrite condition \eqref{eq:trace_condition}. It is easy to see that it is equivalent to:
\begin{gather}
 {}_{A_IA_OB_IB_OF} W = {}_{PA_IA_OB_IB_OF}W \label{eq:post-selection_P},\\
 \tr W = d_Pd_{A_O}d_{B_O}.
\end{gather}
So the linear subspace of valid process matrices is the intersection of the linear subspaces defined in the equations \eqref{eq:local_a}-\eqref{eq:post-selection_P}. Since the projectors onto these subspaces commute, we can obtain the projector onto the subspace of valid process matrices $L_V$ simply by composing them. Elementary manipulations give us then
\begin{multline}\label{eq:projector_lv}
 L_V(W) = W - {}_{F}W + {}_{A_OF}W + {}_{B_OF}W - {}_{A_OB_OF}W - {}_{A_IA_OF}W + {}_{A_IA_OB_OF}W \\- {}_{B_IB_OF}W + {}_{A_OB_IB_OF}W - {}_{A_IA_OB_IB_OF}W + {}_{PA_IA_OB_IB_OF}W.
\end{multline}
To summarize the results of this Appendix, $W$ is a valid process matrix iff
\begin{gather}
 W \ge 0, \\
 \tr W = d_Pd_{A_O}d_{B_O}, \\
 W = L_V(W).
\end{gather}

\section{The null-square lemma}\label{sec:nullsquare}

The purpose of this Appendix is to prove that, for any Hermitian matrix $C$, the greatest null square submatrix of $C$ achievable via a change of basis has size $d_C\equiv\min(n_{+},n_{-})+n_0$, where $n_{+},n_{-},n_0$ denote, respectively, the dimensionality of the subspaces spanned by the eigenvectors of $C$ with positive, negative and zero eigenvalues.

We will first establish that the above quantity is an upper bound for the greatest zero-square submatrix. Let then $S$ be a subspace with the property that $\bra{v}C\ket{w}=0$, for $\ket{v},\ket{w}\in S$, and denote by $\Pi_0$ the projector onto the null space of $C$. 

Note that $\mbox{dim}(S)\leq \mbox{dim}(S_0)+\mbox{dim}(S_{+-})$, where $S_0\equiv\Pi_0 S$ and $S_{+-}\equiv(\id-\Pi_0)S$. Notice as well that  $\bra{v}C\ket{w}=0$, for $\ket{v},\ket{w}\in S_{+-}$. Clearly, the dimension of $S_0$ is upper bounded by $n_0$, hence all we need to do is show that the dimension of $S_{+-}$ is upper bounded by $\min(n_{+},n_{-})$. 

Let $\{\ket{u^{(i)}}\}_{i}$ be a basis for $S_{+-}$, and let $\{\lambda_k\}_{k=1}^{n_{+}}$ and $\{-\mu_k\}_{k=1}^{n_{-}}$ be the set of positive and negative eigenvalues of $C$ (where $\mu_k \ge 0$), with corresponding eigenvectors $\{\ket{\psi_k}\}_{k=1}^{n_{+}}$ and $\{\ket{\phi_k}\}_{k=1}^{n_{-}}$. Then, 
\be
\ket{u^{(i)}}=\sum_{k=1}^{n_{+}}v^{(i)}_k\ket{\psi_k}+\sum_{k=1}^{n_{-}}w^{(i)}_k\ket{\phi_k}.
\ee
From the condition $\bra{u^{(i)}}C\ket{u^{(j)}}=0$ we have that
\be
\sum_{k=1}^{n_{+}}(v^{(i)}_k)^*v^{(j)}_k\lambda_k-\sum_{k=1}^{n_{-}}(w^{(i)}_k)^*w^{(j)}_k\mu_k=0,
\ee
or, in other words, $\braket{\tilde{v}^{(i)}}{\tilde{v}^{(j)}}=\braket{\tilde{w}^{(i)}}{\tilde{w}^{(j)}}$, with $\tilde{v}^{(i)}_k\equiv\sqrt{\lambda_k}v^{(i)}_k$, $\tilde{w}^{(i)}_k\equiv\sqrt{\mu_k}w^{(i)}_k$. This implies that there exists an isometry $U$ such that $\ket{\tilde{v}^{(i)}}=U\ket{\tilde{w}^{(i)}}$. The vectors $\ket{\tilde{v}^{(i)}}\oplus\ket{\tilde{w}^{(i)}}$ (and thus $\ket{u^{(i)}}$) are generated by applying a linear transformation on $\ket{\tilde{w}^{(i)}}$ and so the space they span cannot be greater than $n_{-}$. An identical argument shows that the space spanned by $\{\ket{u^{(i)}}\}_{i=1}^m$ cannot be greater than $n_{+}$ either. We have just proven that $d_C$ is an upper bound for the size of the null square.

To prove that it is also a lower bound, note that the $\min(n_{+},n_{-})$ linearly independent vectors $\{\ket{u_i}\equiv \lambda_k^{-1/2}\ket{\psi_k}+\mu_k^{-1/2}\ket{\phi_k}:i=1,...,\min(n_{+},n_{-})\}$ satisfy $\bra{u_i}C\ket{u_j}=0$ for all $i,j$. Adding all $n_0$ vectors from the null space of $C$, we obtain a basis $\{\ket{u_i}\}_{i=1}^{d_C}$ with the property that $\bra{u_i}A\ket{u_j}=0$.

\end{document}